\documentclass[11pt]{article}
\usepackage[margin=0.92in,headheight=14pt]{geometry}
\usepackage[T1]{fontenc}
\usepackage{lmodern,microtype}
\usepackage{amsmath,amssymb,amsthm,mathtools}
\usepackage{booktabs,tabularx,array}
\usepackage{graphicx,xcolor,listings}
\usepackage{enumitem,placeins}
\usepackage{caption}
\usepackage[numbers,sort&compress]{natbib}
\usepackage{hyperref}
\usepackage[nameinlink,capitalise,noabbrev]{cleveref}
\definecolor{navy}{HTML}{18364C}
\definecolor{teal}{HTML}{167D8D}
\definecolor{codebg}{HTML}{F5F7F9}
\hypersetup{colorlinks=true,linkcolor=navy,citecolor=teal,urlcolor=teal,
 pdftitle={High-Rate Quantum Codes with Proven Distance and Low-Weight Measurements},
 pdfauthor={Kishor Bharti, Tobias Haug, Runzhou Tao, Kevin Ye}}
\setlist{itemsep=2pt,topsep=4pt,leftmargin=1.5em}
\newtheorem{theorem}{Theorem}[section]
\newtheorem{lemma}[theorem]{Lemma}
\newtheorem{proposition}[theorem]{Proposition}
\newtheorem{corollary}[theorem]{Corollary}
\theoremstyle{definition}

\newcommand{\F}{\mathbb F}
\newcommand{\wt}{\operatorname{wt}}

\newcommand{\cG}{\mathcal G}
\newcommand{\cS}{\mathcal S}
\newcommand{\cZ}{\mathcal Z}
\newcommand{\RLG}{\operatorname{RLG}}
\newcommand{\sqcode}[4]{[\![#1,#2,#3,#4]\!]}
\newcommand{\qcode}[3]{[\![#1,#2,#3]\!]}

\newcommand{\leanfile}[1]{\texttt{\detokenize{#1}}}
\lstdefinelanguage{Lean}{morekeywords={structure,where,def,abbrev,theorem,lemma,by,exact,apply,have,show,from,fun,forall,namespace,end,variable,noncomputable,instance,match,with,if,then,else,let,in,import,extends,Prop,Type},sensitive=true,morecomment=[l]{--},morecomment=[s]{/-}{-/},morestring=[b]"}
\title{\vspace{-1.1cm}\bfseries\LARGE High-Rate Quantum Codes with Proven Distance and Low-Weight Measurements}
\author{Kishor Bharti$^{1,\star}$, Tobias Haug$^{2}$, Runzhou Tao$^{1}$, Kevin Ye$^{1}$\\[5pt]
\small $^{1}$Joint Center for Quantum Information and Computer Science (QuICS),\\[-1pt]
\small University of Maryland, College Park, Maryland 20742, USA\\[3pt]
\small $^{2}$Quantum Research Center, Technology Innovation Institute (TII), Abu Dhabi, United Arab Emirates,\\[-1pt]
\small $^\star$ Currently at IonQ}
\date{September 30, 2026}
\begin{document}
\maketitle
\begin{abstract}
We study a family of quantum subsystem codes defined on rectangular coordinate grids of any dimension, with even side lengths of at least four. The directly measured operators act along coordinate lines. For every member of the family, we derive the number of protected logical qubits and prove its dressed distance without a numerical search. To illustrate the resulting design choices, we impose a budget of at most 10,000 data qubits and determine the complete tradeoff between encoding rate and measurement weight within this family at distances 16, 32, and 64. At distance 16, one code protects 4,096 logical qubits among 10,000 data qubits using weight-ten measurements. At distances 32 and 64, codes with 7,776 and 9,216 data qubits protect 1,024 and 256 logical qubits, respectively, using weight-six measurements. We include and explain Lean code snippets throughout the paper to guide readers through the formal verification of our subsystem-code construction and its dressed and bare distances.
\end{abstract}

\section{Introduction}
How many logical qubits can a finite quantum memory protect if the operators measured to obtain error information must have low weight? Encoding rate, distance, and measurement weight constrain different parts of this question. Rate determines the protected capacity of a block. Distance sets the minimum support of an undetectable logical Pauli error. Measurement weight counts the data qubits touched by one measured operator. A useful finite construction should make the relation between these quantities explicit.

Subsystem codes allow syndrome information to be obtained by measuring gauge operators that need not commute with one another. Only the protected subsystem must be preserved; the gauge subsystem can change during error correction~\cite{poulin2005,bacon2006}. This distinction can replace a heavy stabilizer measurement by several lighter gauge measurements. It also changes the distance problem: multiplying a logical operator by a gauge operator can reduce its support, so the minimum must be taken over dressed logical operators.

We study this problem for coordinate-line gauge measurements on rectangular grids. For every number of axes $L\ge1$ and every even tuple $q_1,\ldots,q_L\ge4$, we prove
\begin{align}
 N&=\prod_{i=1}^Lq_i,&K&=\prod_{i=1}^L(q_i-2),\label{eq:mainNK}\\
 g&=N+K-2\prod_{i=1}^L(q_i-1),&D&=2^L.\label{eq:maingD}
\end{align}
Here $N$ is the number of data qubits, $K$ the number of protected logical qubits, $g$ the number of gauge qubits, and $D$ the dressed distance. The largest measured gauge has weight $w=\max_iq_i$. The distance depends only on the number of axes, whereas the rate and measurement weight depend on their lengths. This separates the choice of distance from the remaining finite optimization problem.

The construction extends the coordinate-line description of subsystem many-hypercube codes. Goto introduced many-hypercube codes based on concatenated even-length error-detecting codes~\cite{goto2024}. Nakai and Goto subsequently developed their subsystem versions, including the uniform side-four and side-six families~\cite{nakaigoto2026}. Their follow-up studies neural decoding and gauge-measurement order under circuit-level noise for selected uniform side-four codes~\cite{nakaigoto2026neural}. Mixed-side ordinary many-hypercube concatenations were also studied in Ref.~\cite{goto2025optimized}. We treat all even rectangular tuples in a common algebraic description, prove the dressed distance by a quotient-space argument, determine the minimum nonidentity stabilizer weight, and solve the resulting finite rate--weight optimization.

The matching distance bounds are constructive. For the upper bound, choose two coordinate values on every axis and apply $X$ to their $2^L$ Cartesian-product points. For the lower bound, take an arbitrary Pauli commuting with all stabilizers and quotient out the line gauges. For a nonzero logical class, a suitable binary labeling of the coordinates forces all $2^L$ labels to occur in its support. Thus no choice of gauge representative can reduce its weight below $2^L$. The same corner operator commutes with every gauge, so the bare distance is also $2^L$.

As an application, impose a budget of $10{,}000$ data qubits. At distances $16$, $32$, and $64$, the maximum-rate tuples are respectively $10^4$, $6^5$, and $4^4 6^2$, where exponents denote repeated side lengths. Their parameters are
\begin{equation}
\sqcode{10000}{4096}{974}{16},\qquad
\sqcode{7776}{1024}{2550}{32},\qquad
\sqcode{9216}{256}{5422}{64}.
\label{eq:flagships}
\end{equation}
The complete frontier includes the lighter-measurement alternatives as well. The budget is an illustrative finite constraint, not a restriction on the construction. These optimality statements concern this family and the objectives of protected rate and maximum gauge weight.

We connect these arguments to the supplied Lean formalization. Lean is an interactive theorem prover: a theorem declaration states a mathematical proposition, and its proof supplies a term that the prover checks against that proposition~\cite{lean2021}. The accompanying subsystem modules build on Lean-QEC and mathlib~\cite{ehatamm2026endtoend,mathlib2020}. They define the Pauli and binary descriptions of the code and prove the dressed and bare distance for the constructed subsystem record. We explain selected declarations beside the corresponding mathematics, including the meaning of their assumptions. The full-center equality, rank formulas, minimum stabilizer weight, and finite optimization are proved here in ordinary mathematical form; they are not claimed as formalized by the supplied subsystem modules.

Low gauge weight has a cost. We prove that every nonidentity stabilizer has weight at least $2^{L-1}\min_iq_i$. Each qubit participates in $2L$ gauges in the full-line presentation, and the coordinate grid need not be geometrically local in a physical device. Consequently, the growing-distance sequence is not a conventional quantum low-density parity-check (QLDPC) stabilizer family. Asymptotically good QLDPC constructions achieve constant rate and linear distance with bounded check weights and degrees~\cite{panteleev2022,leverrier2022}. Our result instead gives exact finite parameters and a family-specific optimization. The ideal measurement counts below do not establish fault-tolerant circuit distance or a logical error rate.

\section{Subsystem codes and their Lean representation}
\label{sec:prelim}
\subsection{Gauge equivalence and dressed distance}
We use the binary symplectic description of Pauli operators~\cite{gottesman1997,poulin2005}. Ignoring scalar phases, a Pauli on $N$ qubits is specified by $(x,z)\in\F_2^N\oplus\F_2^N$, representing $X(x)Z(z)$. Its weight is $|\operatorname{supp}(x)\cup\operatorname{supp}(z)|$. Commutation is determined by
\begin{equation}
 \langle(x,z),(x',z')\rangle=x\cdot z'+z\cdot x'.
\end{equation}
All vector-space orthogonal complements below use the binary dot product.

A subsystem code specifies a gauge group $\cG$ and a stabilizer group $\cS$. In binary notation, the stabilizer is the radical of the gauge space, namely the gauge vectors commuting with all gauge vectors. Physical stabilizer representatives are chosen with consistent phases so that $-I\notin\cS$. This phase convention matters: the literal center of a Pauli gauge group may contain scalar operators that cannot be imposed as $+1$ stabilizers. Whenever we write a center rank or support space, phases are omitted.

The stabilized Hilbert space~\cite{kribs2006} decomposes as $\mathcal H_{\rm logical}\otimes\mathcal H_{\rm gauge}$, with dimensions $2^K$ and $2^g$. If $r_G$ and $r_S$ are the binary ranks of gauge and stabilizer groups, then
\begin{equation}
 K=N-\frac{r_G+r_S}{2},\qquad
 g=\frac{r_G-r_S}{2}.
\label{eq:subsystemranks}
\end{equation}
We denote the parameters by $\sqcode{N}{K}{g}{D}$; gauge qubits are not included in the protected count $K$.

A dressed logical commutes with every stabilizer and is not a gauge operator up to phase. Its distance is
\begin{equation}
 D=\min\{\wt(P):P\in\cZ(\cS)\setminus\cG\},
\label{eq:dressed}
\end{equation}
where the set difference is understood modulo scalar phases. A bare logical additionally commutes with every gauge. Thus the dressed distance never exceeds the bare distance. These are properties of a static code; the number of faulty circuit locations needed to cause a logical error is a separate quantity.

\subsection{Reading the Lean source}
The excerpts\footnote{Lean source for code construction and proofs can be found at \url{https://github.com/VerifiedQC/Lean-QEC/tree/subsystem-codes/LeanQEC/Stabilizer/Subsystem}.} retain the variables from their surrounding source: \texttt{C} denotes a subsystem code, \texttt{r} a rectangular grid, and \texttt{l} the dimension $L$. They are excerpts from modules, rather than standalone input files. A few conventions suffice. The command \texttt{def} introduces a definition, \texttt{theorem} introduces a statement with its proof, and \texttt{structure} collects data together with properties that the data must satisfy. A colon means ``has type'': for example, \lstinline|q : Fin l → ℕ| specifies one natural number for each of the \(l\) coordinate directions. Here \texttt{Fin l} is the set \(\{0,\ldots,l-1\}\). The type \texttt{ZMod 2} is the binary field. A variable followed by a dot selects one of its fields or an associated construction. Names beginning with \texttt{h} conventionally denote hypotheses or proofs.

The formal subsystem record extends an existing stabilizer code with a gauge group:
\begin{lstlisting}
structure SubsystemCode (n : ℕ) extends StabCode n where
  gauge : Subgroup (@PauliGroup_group n)
  h_stab_le : stabs ≤ gauge
  h_stab_central :
    gauge ≤ Subgroup.centralizer
      (stabs : Set (PauliGroup_group n))
\end{lstlisting}
The two hypotheses express \(\mathcal S\subseteq\mathcal G\) and \(\mathcal G\subseteq C(\mathcal S)\). For the rectangular construction, the supplied proofs show that box stabilizers are products of gauge lines and commute with every gauge line. This record does not require that \(\mathcal S\) be the full center of \(\mathcal G\) modulo phases. The equality with the full center follows from the separate linear-algebra argument in this paper; it is not among the formal theorems in the supplied source.

The formalization includes scalar phases explicitly. It defines \texttt{trivialDressed} as the subgroup generated by the gauge group and the scalar Pauli phases. The dressed-logical predicate is the following complete definition from \texttt{Subsystem/Basic.lean}:
\begin{lstlisting}
def undetectable (E : PauliGroup n) : Prop :=
  E ∈ C.normalizerGroup ∧
  E ∉ C.trivialDressed ∧ PauliGroup.normalized E
\end{lstlisting}
The first condition says that \(E\) commutes with every stabilizer. The second says that it is not a gauge operator, even after a scalar phase is removed. The third selects a fixed phase representative and has no effect on weight. This is the distinction between a nontrivial logical error and an error that changes only the gauge subsystem.

The distance is the minimum weight of these errors. The definition returns \(N+1\) if there is no such error; this convention has no effect for our family because the cube construction supplies a nontrivial logical operator. The word \texttt{noncomputable}, used in several definitions, permits classical mathematical constructions. It does not stand for a missing proof or assert the existence of an efficient search procedure.

\subsection{The equal-sector CSS case}
The present code is of Calderbank--Shor--Steane (CSS) type: its gauge group is generated by pure $X$ and pure $Z$ operators~\cite{calderbankshor1996,steane1996}. Both sectors have the same support space $A\subseteq\F_2^N$. The stabilizer support in either sector is therefore
\begin{equation}
 S_0=A\cap A^\perp.
\label{eq:S0}
\end{equation}
Writing $a=\dim A$ and $s=\dim S_0$, we obtain $r_G=2a$, $r_S=2s$, and
\begin{equation}
 K=N-a-s,\qquad g=a-s.
\label{eq:CSSranks}
\end{equation}
Pure $X$ dressed logical classes form $S_0^\perp/A$, while pure $X$ bare logical classes form $A^\perp/S_0$. The $Z$ sectors are identical. Keeping these quotients distinct is essential: a lower bound proved only for bare operators need not bound dressed distance in a general subsystem code.

\section{Coordinate-line gauge measurements}
\label{sec:construction}
\subsection{Rectangular grids and line operators}
Let $L\ge1$ and let every $q_i\ge4$ be even. Place one qubit at each point of
\begin{equation}
 \Omega=[q_1]\times\cdots\times[q_L],\qquad [q]=\{0,\ldots,q-1\}.
\end{equation}
An axis-$i$ line fixes every coordinate except the $i$th. Include the full-line operators $X(\ell)$ and $Z(\ell)$ in the gauge generators for every such line $\ell$. We call the resulting code $\RLG(q_1,\ldots,q_L)$.

A line has $q_i$ sites. Two parallel lines are either equal or disjoint. Hence opposite-Pauli line operators on the same axis commute, because any nonempty overlap has even size. Lines on different axes can intersect at a single site, in which case opposite-Pauli operators anticommute. These are allowed gauge measurements; they are not a commuting stabilizer generating set. For $L=1$, the construction reduces to the detector $\qcode{q_1}{q_1-2}{2}$ stabilized by $X^{\otimes q_1}$ and $Z^{\otimes q_1}$.

\subsection{The grid in Lean}
The following complete record from \texttt{RLG.lean} stores the side lengths and their hypotheses.
\begin{lstlisting}
structure RLG (l : ℕ) where
  q : Fin l → ℕ
  h_ge4 : ∀ i, 4 ≤ q i
  h_even : ∀ i, Even (q i)
\end{lstlisting}
Thus an object \texttt{r : RLG l} contains both the rectangular grid and proofs that every side is even and at least four. These conditions are available wherever the object is used. They cannot be dropped accidentally when a later theorem is applied.

A coordinate is a function assigning a permitted value to each axis:
\begin{lstlisting}
def n : ℕ := ∏ i, r.q i
abbrev Coord := (i : Fin l) → Fin (r.q i)
def coordEquiv : r.Coord ≃ Fin r.n := finPiFinEquiv
\end{lstlisting}
The last line supplies a bijection between grid points and the flat labels \(0,\ldots,N-1\). This relabeling allows a geometric proof about lines and boxes to be used with the Pauli library, which indexes qubits by \texttt{Fin r.n}. It preserves supports and Hamming weights.

\subsection{Boxes of parallel lines}
For an axis $i$, choose two distinct values on each other axis and include every value on axis $i$. The resulting support is a box of $2^{L-1}$ parallel lines. Its indicator vector has the form
\begin{equation}
 h_i\otimes\bigotimes_{j\ne i}(e_{a_j}+e_{b_j}),\qquad a_j\ne b_j,
\label{eq:box}
\end{equation}
with factors placed in their coordinate order and $h_i=(1,\ldots,1)$. A same-axis line either misses this box or has even overlap $q_i$. A transverse line either misses it or meets it at two sites. Consequently, the $X$ and $Z$ operators supported on such boxes commute with every line gauge. They are also products of gauge lines. We will prove that these boxes span the full stabilizer support, rather than only a commuting subgroup of the gauge group.

\subsection{Line--box commutation in Lean}
The geometric commutation argument becomes a statement over \(\mathbb F_2\). A line support and a box support have zero binary dot product:
\begin{lstlisting}
theorem line_dot_box (i i' : Fin l) (c a b : r.Coord) :
    r.lineVec i c ⬝ᵥ r.boxVec i' a b = 0 := by
  unfold lineVec boxVec
  split_ifs with hnd
  · by_cases h : i = i'
    · subst h
      apply r.axisBox_dot_eq_zero _ _ i
      rw [lineSet_self, boxSet_self, inter_self,
        card_univ, Fintype.card_fin]
      exact r.h_even i
    · apply r.axisBox_dot_eq_zero _ _ i
      rw [lineSet_self, r.boxSet_ne _ _ h, univ_inter,
        card_pair (hnd i h)]
      exact even_two
  · simp
\end{lstlisting}
Here $\mathbin{\cdot}_v$ denotes the dot product modulo two. The proof has the same two cases as the geometric argument. If the line and box have the same direction, their overlap in that direction has even size \(q_i\); the final command uses \texttt{r.h\_even i}. If their directions differ, the relevant overlap contains the two transverse values defining the box. A degenerate box was defined to have zero support, which accounts for the final case. The commands \texttt{rw} replace an expression by an equal one, \texttt{apply} reduces a goal to the hypotheses of a known result, and \texttt{exact} supplies the required proof.

The results \texttt{stabs\_le\_gauge} and \texttt{gauge\_le\_centralizer} then assemble the subsystem record. Commutation is therefore proved for every allowed grid, without generating a large check matrix for each block size.

\subsection{The \texorpdfstring{$4\times4$}{4-by-4} example}
\label{sec:small}
On a $4\times4$ grid there are four rows and four columns, each measured in both Pauli sectors. The eight line supports per sector have one dependency: the sum of all rows equals the sum of all columns. Thus the gauge rank per sector is seven. The stabilizer rank per sector is five, as the general calculation below will show. Equations~\eqref{eq:CSSranks} then give
\begin{equation}
 \RLG(4,4)=\sqcode{16}{4}{2}{4}.
\end{equation}
The product of two parallel rows is a weight-eight stabilizer. An $X$ operator on a $2\times2$ corner has weight four and commutes with every gauge line, but is not a product of line gauges. It is a bare and dressed logical. The lower-bound proof excludes all lighter dressed logicals.

\begin{figure}[tbp]
\centering
\includegraphics[width=\linewidth]{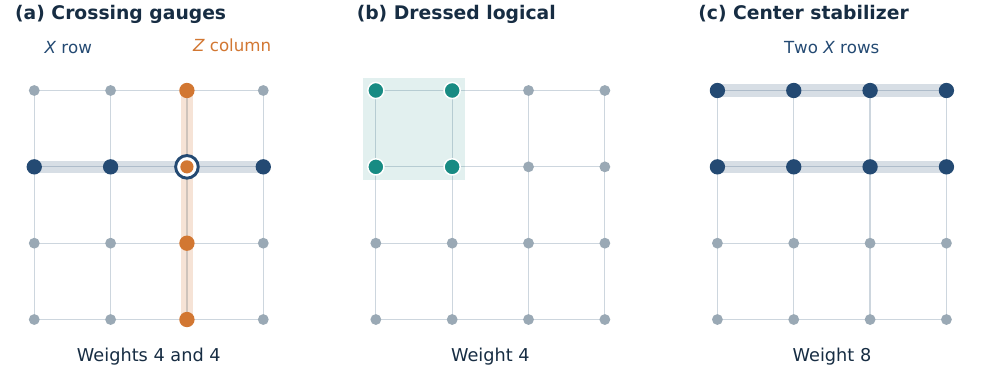}
\caption{Three supports in the $4\times4$ code. (a) A row and a column meet once, so the indicated opposite-Pauli gauge operators anticommute. Each has weight four. (b) The four highlighted sites support a minimum logical operator. (c) Two parallel rows support a weight-eight stabilizer. The colors specify supports; panels (b) and (c) can be used in either pure Pauli sector.}
\label{fig:grid}
\end{figure}

\section{Gauge rank, stabilizer rank, and protected dimension}
\label{sec:ranks}
The coordinate construction becomes a tensor-product calculation over $\F_2$. Set
\begin{equation}
 V_i=\F_2^{q_i},\quad B_i=\langle h_i\rangle,\quad C_i=h_i^\perp,
 \qquad V=\bigotimes_{i=1}^LV_i.
\end{equation}
Here $C_i$ is the even-parity code of dimension $q_i-1$ and distance two. Evenness of $q_i$ gives $B_i\subseteq C_i$, and therefore $\dim(C_i/B_i)=q_i-2$. Tensor basis vectors label physical grid sites. The gauge support is
\begin{equation}
 A=\sum_{i=1}^L V_1\otimes\cdots\otimes B_i\otimes\cdots\otimes V_L.
\label{eq:Adef}
\end{equation}
Each summand is the span of all lines parallel to one axis.

\begin{theorem}[Ranks and encoded subsystems]
\label{thm:parameters}
For every even tuple $q_i\ge4$, the gauge and stabilizer ranks per Pauli sector are
\begin{align}
 a&=N-\prod_i(q_i-1),\label{eq:arank}\\
 s&=\prod_i(q_i-1)-\prod_i(q_i-2).\label{eq:srank}
\end{align}
The full stabilizer support is spanned by the boxes in \eqref{eq:box} and equals
\begin{equation}
 S_0=\sum_{i=1}^L C_1\otimes\cdots\otimes B_i\otimes\cdots\otimes C_L.
\label{eq:center}
\end{equation}
The code protects $K=\prod_i(q_i-2)$ logical qubits and has
$g=N+K-2\prod_i(q_i-1)$ gauge qubits.
\end{theorem}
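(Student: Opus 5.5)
The plan is to diagonalize everything with one adapted basis per axis, so that $A$, $A^\perp$ and $S_0$ all become spans of subsets of a single product basis of $V$. Then intersections and dimensions can be read off by counting.

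\emph{Adapted basis.} For each axis $i$, choose a direct-sum decomposition
\begin{equation*}
 V_i=B_i\oplus D_i\oplus U_i,\qquad C_i=B_i\oplus D_i,
\end{equation*}
with $\dim D_i=q_i-2$ and $\dim U_i=1$. This is possible because $B_i\subseteq C_i$, which uses evenness of $q_i$, and $C_i$ has codimension one. Fix a basis of $V_i$ subordinate to this splitting. Taking tensor products of these basis vectors gives a basis of $V$. Each basis vector has a \emph{type} $(t_1,\ldots,t_L)$ with $t_i\in\{B,D,U\}$, recording which summand its $i$th factor comes from. Every subspace below will be a span of basis vectors selected by a condition on their types. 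For such subspaces, sums and intersections are computed by taking unions and intersections of the selected type sets.

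\emph{Gauge rank via $A^\perp$.} By \eqref{eq:Adef}, the $i$th summand of $A$ is the span of all types with $t_i=B$, since $V_j=B_j\oplus D_j\oplus U_j$ for $j\ne i$. Hence $A$ is the span of all types containing at least one $B$. For $A^\perp$, I would use that the dot product on $V$ is the tensor product of the dot products on the $V_i$. Under this product form, the orthogonal complement of $V_1\otimes\cdots\otimes B_i\otimes\cdots\otimes V_L$ is $V_1\otimes\cdots\otimes C_i\otimes\cdots\otimes V_L$, checked by a dimension count plus the obvious inclusion. Therefore
\begin{equation*}
 A^\perp=\bigcap_i V_1\otimes\cdots\otimes C_i\otimes\cdots\otimes V_L=\bigotimes_i C_i,
\end{equation*}
where the last equality holds because each of these subspaces is the span of the types with $t_i\in\{B,D\}$. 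This gives $\dim A^\perp=\prod_i(q_i-1)$, and so $a=N-\prod_i(q_i-1)$, which is \eqref{eq:arank}.

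\emph{Stabilizer support.} Using \eqref{eq:S0}, $S_0=A\cap A^\perp$ is the span of the types with every $t_i\in\{B,D\}$ and at least one $t_i=B$. This is exactly $\sum_i C_1\otimes\cdots\otimes B_i\otimes\cdots\otimes C_L$, proving \eqref{eq:center}. Counting types gives $s=\prod_i(q_i-1)-\prod_i(q_i-2)$, which is \eqref{eq:srank}. For the box statement, note that $C_j$ is spanned by the vectors $e_{a_j}+e_{b_j}$. Hence each summand $C_1\otimes\cdots\otimes B_i\otimes\cdots\otimes C_L$ is spanned by the product vectors \eqref{eq:box}, so the boxes span $S_0$. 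Finally, substitute into \eqref{eq:CSSranks}. This gives $K=N-a-s=\prod_i(q_i-2)$ and
\begin{equation*}
 g=a-s=N-2\prod_i(q_i-1)+\prod_i(q_i-2)=N+K-2\prod_i(q_i-1).
\end{equation*}

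The only step needing care is the claim that intersections of the structured subspaces are computed type by type. This requires every subspace in play to be spanned by a subset of the \emph{same} product basis. The adapted splitting $B_i\oplus D_i\oplus U_i$ is chosen precisely so that $B_i$, $C_i$ and $V_i$ are all compatible with it. I expect verifying $\bigl(V_1\otimes\cdots\otimes B_i\otimes\cdots\otimes V_L\bigr)^\perp=V_1\otimes\cdots\otimes C_i\otimes\cdots\otimes V_L$ cleanly to be the main, though mild, obstacle. The dot product is nondegenerate but not diagonal in the adapted basis, so I would argue this orthogonality in the standard basis instead. Inclusion follows because $h_i\cdot c=0$ for every $c\in C_i$. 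Equality then follows since $\dim B_i+\dim C_i=q_i$, so the two dimensions sum to $N$.
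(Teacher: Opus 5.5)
Your proof is correct and follows essentially the paper's route: both use a tensor basis built from complements of $B_i$ in each factor, show $A^\perp=\bigotimes_i C_i$, identify $S_0=A\cap A^\perp$ with $\sum_i C_1\otimes\cdots\otimes B_i\otimes\cdots\otimes C_L$ (spanned by boxes), and then count dimensions. The difference is only in packaging: the paper counts through the quotient map $\pi=\bigotimes_i\pi_i$ with kernel $A$, which also yields the isomorphism $A^\perp/S_0\cong\bigotimes_i(C_i/B_i)$ reused in the distance lemma, whereas your flag-adapted basis $B_i\oplus D_i\oplus U_i$ makes every subspace a coordinate subspace and so spells out the intersection and kernel identifications that the paper states briefly.
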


\begin{proof}
Let $\pi_i:V_i\to V_i/B_i$ be the quotient map. The kernel of $\pi=\bigotimes_i\pi_i$ is $A$. One way to see this is to choose a complement to $B_i$ in each $V_i$: in the resulting tensor decomposition, $\pi$ kills precisely the summands containing at least one $B_i$ factor. Its image has dimension $\prod_i(q_i-1)$, proving \eqref{eq:arank}.

The orthogonal complement of the line span is
\begin{equation}
 A^\perp=\bigotimes_iC_i.
\label{eq:Aperp}
\end{equation}
Indeed, orthogonality to every axis-$i$ line requires even parity on every such line. Intersecting these conditions for all axes gives the tensor product of the local even-parity spaces.

Restrict $\pi$ to $\bigotimes_iC_i$. Its image is $\bigotimes_i(C_i/B_i)$ and its kernel is the sum on the right of \eqref{eq:center}. The same kernel is $A\cap A^\perp=S_0$. This proves both the stabilizer formula and
\begin{equation}
 A^\perp/S_0\simeq\bigotimes_i(C_i/B_i).
\label{eq:barequot}
\end{equation}
Every $C_i$ is spanned by vectors $e_a+e_b$ with $a\ne b$. Hence the summands in \eqref{eq:center} are spanned by the box supports of \eqref{eq:box}; no additional stabilizer directions are missing. Subtracting the image dimension gives \eqref{eq:srank}. Finally, substitution into \eqref{eq:CSSranks} yields the stated $K$ and $g$.
\end{proof}

The proof also connects the two presentations used in this paper. The mathematical definition takes the full center of the line-gauge space. The Lean construction starts from box stabilizers and line gauges. Equation~\eqref{eq:center} identifies their binary support spaces. The formal coverage of these statements is recorded in \cref{sec:formal}.

There are $\sum_iN/q_i$ raw line measurements per sector. This is generally larger than $a$: measuring every line retains algebraic redundancy. The Lean indexing of generators may repeat the same line for different anchor points, so the number of formal indices is not a hardware measurement count.

\section{Dressed and bare logical operators}
\label{sec:distance}

The rank calculation counts protected logical qubits, but it does not
determine their distance. A dressed logical may be multiplied by any
gauge operator without changing its action on the protected subsystem.
We must therefore bound the weight of every representative of a
nonzero logical class, including representatives that do not commute
with the individual gauge generators.

\begin{lemma}[Weight in the dressed quotient]\label{lem:quotient}
Let $L\ge1$ and let $q_i\ge4$ be even. Set
$V_i=\F_2^{q_i}$, $B_i=\langle h_i\rangle$, and $C_i=h_i^\perp$,
where $h_i$ is the all-ones vector, and define
\[
 A=\sum_{i=1}^{L}
 V_1\otimes\cdots\otimes B_i\otimes\cdots\otimes V_L.
\]
Every vector representing a nonzero class in
\[
 \frac{A+\bigotimes_i C_i}{A}
 \cong\bigotimes_i(C_i/B_i)
\]
has Hamming weight at least $2^L$. The bound is attained by
$\bigotimes_i(e_0+e_1)$.
\end{lemma}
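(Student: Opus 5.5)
The plan is to use the binary-labeling argument sketched in the introduction: partition each axis into two even-size pieces so that every one of the $2^L$ cells of the resulting product partition must contain a site of the support. The tool is a family of ``test'' vectors that are orthogonal to $A$. Let $v$ represent a nonzero class. Since the boxes of parallel lines live inside $A$, we may write $v=a+w$ with $a\in A$ and $w\in\bigotimes_iC_i$. Nonzeroness of the class means that the image of $w$ in $\bigotimes_i(C_i/B_i)$ is nonzero, as in the proof of Theorem~\ref{thm:parameters}.

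\textbf{Step 1 (test vectors).} For even-weight vectors $c_i\in C_i$, the pure tensor $t=\bigotimes_i c_i$ lies in $\bigotimes_iC_i=A^\perp$ by \eqref{eq:Aperp}. Hence $\langle v,t\rangle=\langle w,t\rangle$, so this pairing depends only on the class of $v$ modulo $A$. Moreover, $\langle w,t\rangle$ is unchanged if any single factor $c_i$ is replaced by $c_i+h_i$. To see this, expand $\langle w, c_1\otimes\cdots\otimes h_i\otimes\cdots\otimes c_L\rangle$: this is the pairing of $w$ with a sum of axis-$i$ lines, which vanishes because $w\in A^\perp$. Consequently, if $T_i=\operatorname{supp}(c_i)$ and $T_i^c$ is its complement, which also has even size since $q_i$ is even, then every cell $\prod_i T_i^{\epsilon_i}$ with $T_i^{0}=T_i$ and $T_i^{1}=T_i^c$ gives the same parity
\[
 \langle v,\textstyle\bigotimes_i 1_{T_i^{\epsilon_i}}\rangle=\langle w,\textstyle\bigotimes_i c_i\rangle .
\]

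\textbf{Step 2 (finding an odd test).} It remains to find even-weight $c_i$ with $\langle w,\bigotimes_ic_i\rangle=1$. If this holds, each of the $2^L$ disjoint cells has odd overlap with $\operatorname{supp}(v)$, hence meets it, and so $\wt(v)\ge 2^L$. I expect this step to be the main obstacle. The plan is to show that the dot product induces a nondegenerate bilinear form on $C_i/B_i$. The form is well defined because $B_i\perp C_i$. Its radical is $C_i\cap C_i^\perp=C_i\cap B_i=B_i$, which is zero in the quotient. The tensor product of nondegenerate forms is nondegenerate on $\bigotimes_i(C_i/B_i)$, which can be checked by choosing dual bases factorwise. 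Since pure tensors span this space, the nonzero image of $w$ pairs to $1$ with some pure tensor $\bigotimes_i \bar c_i$. Lifting each $\bar c_i$ to $c_i\in C_i$ then does the job; the lift is irrelevant by the $B_i\perp C_i$ invariance.

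\textbf{Step 3 (attainment).} The vector $\bigotimes_i(e_0+e_1)$ lies in $\bigotimes_iC_i$ and has weight $2^L$. Its class is the pure tensor of the classes of $e_0+e_1$ in $C_i/B_i$. Each of these is nonzero because $q_i\ge4>2$ means $e_0+e_1\ne h_i$. A pure tensor of nonzero vectors is nonzero, so the class is nonzero and the bound $2^L$ is attained.
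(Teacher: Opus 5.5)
Your proof is correct and is essentially the paper's argument in dual form. A test vector $c_i\in C_i$ plays the role of the paper's functional $\widetilde f_i$ on $V_i/B_i$, and pairing with $h_i$ is the parity functional $\phi_i$, so your replacement $c_i\mapsto c_i+h_i$ is the paper's $g_{i,0}=\phi_i+\widetilde f_i$ and your $2^L$ cells are the label classes of $\gamma$. The only real difference is in Step~2: you get the odd test from nondegeneracy of the induced form on $C_i/B_i$ (tacitly using that the dot product on $V$ is the tensor product of the factor dot products), whereas the paper picks coordinate functionals on $Q_i$ and extends them arbitrarily; both routes are valid.
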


The linear functionals in the proof divide physical sites into $2^L$ classes; nontrivial logical action forces every class to contain an occupied site.

\begin{proof}
Write $W_i=V_i/B_i$, let $\pi_i:V_i\to W_i$ be the quotient map,
and set $Q_i=C_i/B_i$. Since $q_i$ is even, the parity functional
$v\mapsto\sum_a v_a$ vanishes on $B_i$ and descends to a functional
$\phi_i:W_i\to\F_2$. It satisfies
\[
 Q_i=\ker\phi_i,
 \qquad \phi_i\bigl(\pi_i(e_a)\bigr)=1
 \quad\text{for every coordinate }a.
\]
The map $\pi=\bigotimes_i\pi_i$ has kernel $A$. Moreover,
$\pi(\bigotimes_i C_i)=\bigotimes_i Q_i$, so a vector $x$
represents a nonzero class in the displayed quotient exactly when
\[
 0\ne z=\pi(x)\in\bigotimes_i Q_i.
\]

Choose linear functionals $f_i:Q_i\to\F_2$ such that
$(\bigotimes_i f_i)(z)=1$. Such a choice exists: expand $z$ in
tensor-product bases, choose a nonzero coefficient, and use the
coordinate functional for that basis element in each factor.
Extend each $f_i$ to a functional $\widetilde f_i:W_i\to\F_2$.

These functionals assign an $L$-bit label to each physical site
$a=(a_1,\ldots,a_L)$:
\[
 \gamma(a)=\bigl(
 \widetilde f_1(\pi_1(e_{a_1})),\ldots,
 \widetilde f_L(\pi_L(e_{a_L}))\bigr).
\]
We show that the support of $x$ contains a site with each of the
$2^L$ possible labels. For $b\in\{0,1\}^L$, define
\[
 F_b=\bigotimes_{i=1}^{L} g_{i,b_i},
 \qquad
 g_{i,1}=\widetilde f_i,
 \qquad
 g_{i,0}=\phi_i+\widetilde f_i.
\]
Both $g_{i,0}$ and $g_{i,1}$ restrict to $f_i$ on $Q_i$.
Consequently $F_b(z)=1$ for every $b$. On a projected physical
basis vector, however, $F_b$ equals one precisely when the site
has label $b$: the factor with $b_i=0$ takes the value
$1+\widetilde f_i(\pi_i(e_{a_i}))$.
Writing $x$ as the sum of its occupied basis vectors gives
\[
 1=F_b(z)
  =\bigl|\{a\in\operatorname{supp}(x):\gamma(a)=b\}\bigr|
       \pmod 2.
\]
Each label therefore occurs an odd, and hence positive, number
of times. The $2^L$ label classes are disjoint, proving
$\wt(x)\ge2^L$.

For the matching upper bound, each $e_0+e_1$ lies in
$C_i\setminus B_i$ because $q_i\ge4$. Its class in $Q_i$ is
nonzero, and the tensor product of these nonzero classes is
nonzero. Thus $\bigotimes_i(e_0+e_1)$ is a nontrivial class
of weight $2^L$.
\end{proof}

The lemma gives the minimum over all nonzero classes; individual
classes can have larger minimum weight. Its use of $\pi(x)$ is
essential: adding a gauge word changes $x$ but leaves this label
unchanged, so the lower bound already includes every gauge dressing.

\begin{theorem}[Exact dressed distance]\label{thm:distance}
The rectangular coordinate-line subsystem code with even side
lengths $q_i\ge4$ has
\[
 D_X=D_Z=D=2^L.
\]
The same distance is obtained when mixed $X$--$Z$ Pauli operators
are allowed.
\end{theorem}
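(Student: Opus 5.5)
The plan is to reduce the statement to \cref{lem:quotient} through the CSS structure in \cref{sec:prelim}. Since $\RLG$ is CSS with equal sector support $A$, a pure $X$ operator $X(x)$ commutes with every stabilizer exactly when $x\in S_0^\perp$. It is a gauge operator up to phase exactly when $x\in A$. The first step is therefore to identify $S_0^\perp$. By \cref{thm:parameters}, $S_0=A\cap A^\perp$, and taking complements gives $S_0^\perp=A+A^{\perp\perp}=A+A^\perp$ (finite-dimensional, nondegenerate dot product). Using \eqref{eq:Aperp}, $A^\perp=\bigotimes_iC_i$. Hence the pure $X$ dressed logicals are exactly the vectors $x\in (A+\bigotimes_iC_i)\setminus A$, i.e.\ the representatives of nonzero classes in the quotient of \cref{lem:quotient}. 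The lemma then gives $D_X\ge 2^L$, and the $Z$ sector is identical, so $D_Z\ge2^L$.

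For the upper bound I will use the corner vector $\bigotimes_i(e_0+e_1)$, which the lemma shows is a nontrivial class of weight $2^L$. It lies in $\bigotimes_iC_i=A^\perp$, so $X$ and $Z$ on it commute with every gauge. It is therefore also a bare logical, which yields the bare-distance remark. This gives $D_X=D_Z=2^L$.

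For mixed operators $P=X(x)Z(z)$, the symplectic form splits across the sectors. Because the stabilizer group is generated by pure $X$ and pure $Z$ box operators, $P$ commutes with all stabilizers if and only if $x\in S_0^\perp$ and $z\in S_0^\perp$. Likewise, $P$ lies in $\cG$ up to phase if and only if $x\in A$ and $z\in A$, since $\cG$ is generated by pure-sector operators and the binary gauge space is $A\oplus A$. So if $P$ is a dressed logical, at least one of $x,z$, say $x$, represents a nonzero class in $S_0^\perp/A$. Then $\wt(P)\ge|\operatorname{supp}(x)|\ge2^L$ by the lemma. Combined with the corner operator, this gives $D=2^L$.

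The main obstacle is the bookkeeping identification $S_0^\perp=A+\bigotimes_iC_i$, together with the fact that the stabilizer is exactly $S_0$ in each sector. The latter is the full-center statement of \cref{thm:parameters}, which we may assume. Without it, the normalizer could be larger and the lemma would not apply. I would state explicitly that the binary stabilizer space equals $S_0\oplus S_0$ before taking the complement.
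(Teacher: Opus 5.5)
Your proposal is correct and follows essentially the same route as the paper: identify $S_0^\perp=A+\bigotimes_iC_i$ so that the dressed $X$-classes are exactly the quotient of Lemma~\ref{lem:quotient}, use the corner tensor for equality, and handle mixed Paulis by noting that both components lie in $S_0^\perp$ while at least one lies outside $A$. One trivial slip: the complement identity should read $(A\cap A^\perp)^\perp=A^\perp+A^{\perp\perp}=A^\perp+A$, since $A+A^{\perp\perp}$ is just $A$.
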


\begin{proof}
The gauge support is $A$ in both Pauli sectors, and the center
support is $S_0=A\cap A^\perp$. An $X$-type operator commutes
with every center stabilizer precisely when its support lies in
\[
 S_0^\perp=A+A^\perp=A+\bigotimes_i C_i.
\]
Its action on the protected subsystem is nontrivial precisely
when its support does not lie in $A$. The dressed $X$-logical
classes therefore form the quotient in Lemma~\ref{lem:quotient}.
That lemma proves $D_X=2^L$. Interchanging $X$ and $Z$ gives
$D_Z=2^L$.

A mixed Pauli has binary components $(x,z)$ and physical support
$\operatorname{supp}(x)\cup\operatorname{supp}(z)$. To commute
with the center, both components must belong to $S_0^\perp$.
To act nontrivially on the protected subsystem, at least one
component must lie outside $A$. That component has weight at
least $2^L$, and its support is contained in the mixed support.
The pure-sector corner operator supplies equality for the full
Pauli distance.
\end{proof}

\begin{corollary}[Bare distance]
The minimum weight of a nontrivial bare logical operator is also
$2^L$.
\end{corollary}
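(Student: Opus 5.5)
The bare distance is pinned between two bounds that are already available, so I will argue by sandwiching. The lower bound comes from the remark in Section~2: every nontrivial bare logical is a nontrivial dressed logical, so its weight is at least the dressed distance, which is $2^L$ by Theorem~\ref{thm:distance}. The upper bound comes from exhibiting a single nontrivial bare logical of weight exactly $2^L$, namely the corner operator $X(x)$ with $x=\bigotimes_i(e_0+e_1)$.

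For the lower bound, I will make the inclusion explicit in binary language, since the paper warns that the two quotients must be kept distinct. A bare logical $(x,z)$ commutes with every gauge operator. Because both sectors have gauge support $A$, this means $x,z\in A^\perp$. Nontriviality of a bare logical means it is not a stabilizer, so at least one component, say $x$, lies outside $S_0=A\cap A^\perp$. Since $x\in A^\perp$, this is the same as $x\notin A$. Also $A^\perp\subseteq A+A^\perp=S_0^\perp$, so $x$ represents a nonzero class in $(A+\bigotimes_iC_i)/A$, using $A^\perp=\bigotimes_iC_i$ from \eqref{eq:Aperp}. Lemma~\ref{lem:quotient} then gives $\wt(x)\ge 2^L$, and the mixed support contains $\operatorname{supp}(x)$.

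For the upper bound, I need three facts about $x=\bigotimes_i(e_0+e_1)$.
\begin{itemize}
\item It lies in $\bigotimes_iC_i=A^\perp$, because each factor $e_0+e_1$ has even parity. Hence $X(x)$ commutes with every $Z$ line gauge.
\item It commutes with every $X$ gauge trivially, since both are pure $X$ operators.
\item It is not a stabilizer, that is, $x\notin S_0$. Lemma~\ref{lem:quotient} already shows that its class in $\bigotimes_i(C_i/B_i)$ is nonzero, because $q_i\ge4$ puts $e_0+e_1$ outside $B_i$. Then \eqref{eq:barequot} identifies $A^\perp/S_0$ with $\bigotimes_i(C_i/B_i)$.
\end{itemize}
Geometrically, this is the $2^L$-point Cartesian corner, and every line meets it in $0$ or $2$ sites. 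So $X(x)$ is a nontrivial bare logical of weight $2^L$, and it attains the bound.

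The main obstacle is being careful with definitions rather than doing any computation. I need to fix what counts as a nontrivial bare logical: it should be an element of the gauge centralizer that is not in $\cS$ up to phase. With that definition, nontriviality in $A^\perp/S_0$ implies $x\notin A$, and so nontriviality in the dressed quotient. I would state this reduction explicitly, because it uses $A^\perp\cap A=S_0$, which is exactly the full-center identification established in Theorem~\ref{thm:parameters}.
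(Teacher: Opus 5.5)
Your proposal is correct and follows the paper's proof essentially verbatim: the lower bound comes from noting that a nontrivial bare logical (support in $A^\perp$, outside $S_0$) is a nontrivial dressed logical and applying Lemma~\ref{lem:quotient}, and the upper bound comes from the corner tensor $\bigotimes_i(e_0+e_1)\in A^\perp\setminus A$, with mixed Paulis handled by the union-of-supports argument. One small attribution point: the step from $x\notin S_0$ to $x\notin A$ (for $x\in A^\perp$) follows directly from the definition $S_0=A\cap A^\perp$ in \eqref{eq:S0}, and Theorem~\ref{thm:parameters} is needed only if the stabilizers are taken to be the box products, as in the Lean record.
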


\begin{proof}
A bare logical commutes with every gauge operator. In one Pauli
sector its support lies in $A^\perp$, and it is nontrivial when
it lies outside $S_0=A\cap A^\perp$. It is consequently also a
nontrivial dressed logical, so its weight is at least $2^L$.
The corner tensor $\bigotimes_i(e_0+e_1)$ belongs to
$\bigotimes_i C_i=A^\perp$ and lies outside $A$ by the lemma.
It is therefore a bare logical of weight $2^L$. The same union
of supports argument treats mixed Pauli operators.
\end{proof}

\subsection{The distance lower bound in Lean}\label{sec:lean-slicing}
The Lean source proves the lower bound by induction on the number of coordinates. We describe that argument because it explains why the distance doubles with each added coordinate and why the even-side hypothesis matters.

Let \(W_L\) be the binary span of the coordinate lines and let \(S_L\) be the span of the box supports. For \(z\in S_L^\perp\setminus W_L\), split the array along the first coordinate into slices \(z_u\), each on an \((L-1)\)-dimensional grid. Let \(W'\) denote the line span on this smaller grid. Two slices are called equivalent if \(z_u+z_v\in W'\). This is ordinary gauge equivalence for their difference.

Two consequences of the box constraints control the induction. First, every difference \(z_u+z_v\) commutes with the smaller box checks. Second,
\begin{equation}
 \sum_u z_u\in W'.
\end{equation}
To obtain the second statement, consider the boxes whose full side lies in the slicing direction. Their constraints say that the slice sum has even parity on every set containing two chosen values in each remaining coordinate. The auxiliary theorem \texttt{Grid.mem\_lineSpan\_of\_cubeSum} proves that any array with all these parities zero belongs to \(W'\).

If all slices were equivalent, choose a reference slice \(z_{u_0}\). Each difference \(z_u+z_{u_0}\) could be generated by lines within slice \(u\), while the repeated array \(z_{u_0}\) could be generated by lines in the first direction. These two contributions reconstruct \(z\), contradicting \(z\notin W_L\). There is consequently an inequivalent pair. For every such pair, the induction hypothesis and the Hamming-weight triangle inequality give
\begin{equation}
 \operatorname{wt}(z_u)+\operatorname{wt}(z_v)
 \geq \operatorname{wt}(z_u+z_v)
 \geq 2^{L-1}.
\end{equation}

It remains to find two disjoint inequivalent pairs. Pass to the quotient by \(W'\), and write \(a_u=[z_u]\). The cosets are not all equal and satisfy \(\sum_u a_u=0\). Since the number of slices is even, they cannot consist of one exceptional coset and an otherwise constant list: the sum would be the difference between the exceptional and common cosets. With at least four slices, every other nonconstant list has two disjoint unequal pairs. Indeed, start with one unequal pair. If the remaining entries contain an unequal pair, we are done. Otherwise all remaining entries equal one coset; either both selected entries differ from that coset and can be paired with two remaining entries, or there is just one exceptional entry. The latter case was excluded. Summing the two pair bounds gives \(\operatorname{wt}(z)\geq2^L\).

The formal statement below is reproduced from \texttt{GridLines.lean}; its proof body is omitted from this excerpt.
\begin{lstlisting}
theorem two_pow_le_hammingNorm :
    ∀ (l : ℕ) (α : Fin l → Type u)
      [∀ i, Fintype (α i)] [∀ i, DecidableEq (α i)],
    (∀ i, Even (Fintype.card (α i)) ∧
      4 ≤ Fintype.card (α i)) →
    ∀ z : Vec α, Cent z → z ∉ lineSpan α →
      2 ^ l ≤ hammingNorm z
\end{lstlisting}
Here \texttt{Fintype} specifies finite coordinate sets, \texttt{Cent z} is the box-commutation condition, and \texttt{hammingNorm z} is its support size. The statement is independent of the side lengths apart from their parity and lower bound. The formal proof starts at dimension zero, where a nonzero binary vector on the single grid point has weight one.

\subsection{Combining the two distance bounds}
The cube support contains two chosen values in each coordinate and therefore has \(2^L\) points. It meets every coordinate line in zero or two points, so its \(X\) operator commutes with every gauge operator. To show that it is not a product of gauge lines, choose a second cube sharing exactly one corner with it. Every gauge line has even overlap with the second cube, whereas the first cube has odd overlap. This supplies the required witness outside the gauge group.

The final dressed-distance proof in \texttt{RLGDistance.lean} is short because the witness and lower bound have already been established:
\begin{lstlisting}
theorem distance_eq :
    r.toSubsystemCode.distance = 2 ^ l := by
  obtain ⟨a, b, hab⟩ := r.exists_cube
  exact SubsystemCode.distance_eq_of _
    (r.cube_undetectable hab) (r.pauli_weight_cube hab)
    (fun F hF => r.two_pow_le_pauli_weight F hF)
\end{lstlisting}
The first line of the proof chooses a nondegenerate cube. The underscore asks Lean to infer an argument; \texttt{fun F hF} introduces an arbitrary Pauli \(F\) and a hypothesis \texttt{hF} that it is a dressed logical. The next lines supply three facts to \texttt{distance\_eq\_of}: this cube is a nontrivial dressed logical, its weight is \(2^L\), and every dressed logical has at least that weight. The proof applies to mixed \(X\), \(Y\), and \(Z\) errors as well: at least one binary Pauli component lies outside the line span, and its Hamming weight is no larger than the Pauli support.

The same cube commutes with the entire gauge group and hence is also a bare logical. Since every bare logical is a dressed logical, the same lower bound gives bare distance \(2^L\). The source records this equality as \texttt{RLG.bareDistance\_eq}.

\section{Minimum stabilizer weight}
\label{sec:centerweight}
The distance concerns logical operators. A different minimum determines whether the code also has a low-weight commuting stabilizer presentation. In the present family, gauge operators can be much lighter than every nonidentity stabilizer.

\begin{theorem}[Minimum nonidentity stabilizer weight]
\label{thm:centerweight}
For $L\ge1$ and even side lengths $q_i\ge4$, the minimum weight of a nonzero word in the stabilizer support is
\begin{equation}
 d(S_0)=2^{L-1}\min_iq_i.
\label{eq:centerdistance}
\end{equation}
This is also the minimum weight of a nonidentity Pauli stabilizer, including mixed Pauli operators.
\end{theorem}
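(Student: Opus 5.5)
The upper bound comes from the box of \eqref{eq:box}. Choose the axis $i$ attaining $\min_iq_i$ and two distinct values on every other axis. The resulting box lies in $S_0$ by Theorem~\ref{thm:parameters} and has weight $q_i2^{L-1}$. For the lower bound we use the description
\[
 S_0=A\cap\bigotimes_iC_i ,
\]
which follows from \eqref{eq:center} and \eqref{eq:Aperp}. We proceed by induction on $L$, slicing along one coordinate, in the spirit of the argument in \cref{sec:lean-slicing}.

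\emph{Base case $L=1$.} Here $S_0=B_1=\langle h_1\rangle$, whose only nonzero word has weight $q_1$.

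\emph{Inductive step.} Let $0\ne z\in S_0$ on an $L$-dimensional grid, and slice along axis $1$ into slices $z_u$, $u\in[q_1]$. Each slice lies in $\bigotimes_{j\ge2}C_j$, since even parity along every line of axes $j\ge2$ is inherited slice by slice. Parity along axis-$1$ lines gives $\sum_uz_u=0$. Membership of $z$ in $A$ means
\[
 z=h_1\otimes y+\sum_u e_u\otimes w_u,\qquad w_u\in A' ,
\]
where $A'$ is the line span on the smaller grid. Hence each slice $z_u=y+w_u$ lies in the single coset $y+A'$. Two cases arise.

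\begin{itemize}
\item \emph{Case (a): $y\notin A'$.} Then every slice is a nonzero vector of $(y+A')\cap\bigotimes_{j\ge2}C_j$, and it represents a nonzero class in the quotient of Lemma~\ref{lem:quotient} for $L-1$ axes. That lemma gives $\wt(z_u)\ge2^{L-1}$ for all $q_1$ slices, so $\wt(z)\ge q_12^{L-1}\ge2^{L-1}\min_iq_i$.
\item \emph{Case (b): $y\in A'$.} Then every slice lies in $A'\cap\bigotimes_{j\ge2}C_j=S_0'$, the stabilizer support of the smaller grid. Any nonzero slice therefore has weight at least $2^{L-2}\min_{j\ge2}q_j$ by induction. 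Because $\sum_uz_u=0$ and $z\ne0$, at least two slices are nonzero, giving $\wt(z)\ge2^{L-1}\min_{j\ge2}q_j$.
\end{itemize}
In either case $\wt(z)\ge2^{L-1}\min_iq_i$. For $L=2$ the case (b) bound uses the base case on the one-dimensional slices. For $L=1$ the slices are single bits, so the base case is handled directly.

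The main obstacle is justifying the case split cleanly, namely that all slices lie in one coset of $A'$ and that, in case (a), they are genuine nonzero dressed classes. The second point is immediate because $\bigotimes_{j\ge2}C_j=A'^\perp$ and the slices lie in $A'^{\perp}\setminus A'$. The first follows from the explicit form of elements of $A$. One must also check that a slice in case (a) cannot be zero: this holds because $0\in A'$ while $z_u\notin A'$.

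Finally, for mixed Pauli stabilizers, a nonidentity stabilizer has binary components $(x,z)$ with $x,z\in S_0$, not both zero. Its support contains the support of a nonzero component, so the same bound applies, and the pure box attains it.
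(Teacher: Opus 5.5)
Your proof is correct and follows the paper's argument. Both use induction on $L$ by slicing, with the same dichotomy: a nonzero common slice coset forces all $q$ slices to be nonzero, while a zero coset forces at least two nonzero slices in the smaller stabilizer support. Both finish with the box for the upper bound and the union-of-supports argument for mixed Paulis. The differences are minor. You obtain the common coset from $S_0=A\cap\bigotimes_iC_i$ (which is really \eqref{eq:S0} with \eqref{eq:Aperp}, not \eqref{eq:center}) rather than from the recursion \eqref{eq:centerrecursion}. In case (a) you invoke Lemma~\ref{lem:quotient}, where the simpler fact that nonzero words of $\bigotimes_{j\ge2}C_j$ have weight at least $2^{L-1}$ already suffices.
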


\begin{proof}
For $L=1$, the stabilizer support is $B_1$, whose unique nonzero word has weight $q_1$. Suppose the claim holds in $L-1$ dimensions. Write $C_{<L}=\bigotimes_{i<L}C_i$ and let $S_{<L}$ be the smaller stabilizer support. Separating the last tensor factor in \eqref{eq:center} gives
\begin{equation}
 S_0=C_{<L}\otimes B_L+S_{<L}\otimes C_L.
\label{eq:centerrecursion}
\end{equation}
View a nonzero word of $S_0$ as $q_L$ slices in $C_{<L}$. Equation~\eqref{eq:centerrecursion} implies that all slices have the same coset modulo $S_{<L}$ and their sum is zero.

The tensor product $C_{<L}$ has minimum distance $2^{L-1}$. For completeness, the usual product-distance argument is inductive: a nonzero coordinate fiber of a tensor-code word occupies at least the distance of one factor, and each resulting nonzero slice is a word in the remaining tensor code. Product minimum words attain the bound.

If the common slice coset is nonzero, all $q_L$ slices are nonzero, giving weight at least $q_L2^{L-1}$. If the common coset is zero, every slice lies in $S_{<L}$, and at least two are nonzero because their sum vanishes. Induction then gives weight at least $2^{L-1}\min_{i<L}q_i$. The smaller of these two bounds is \eqref{eq:centerdistance}.

Choose a shortest axis $j$. The box
\begin{equation}
 h_j\otimes\bigotimes_{i\ne j}(e_0+e_1)
\end{equation}
has weight $q_j2^{L-1}$ and attains the bound. Finally, a mixed stabilizer has support equal to the union of its two binary components. At least one component is a nonzero word of $S_0$, so a mixed stabilizer cannot be lighter.
\end{proof}

For $\RLG(6^5)$, the measured gauges have weight six, whereas the lightest nonidentity stabilizer has weight $96$. For $\RLG(4^4 6^2)$, the corresponding weights are six and $128$. No choice of stabilizer generating basis can avoid this lower bound. The result explains precisely what the gauge freedom reduces: the supports of the directly measured operators, while the stabilizer constraints remain heavy. The supplied Lean code constructs these box stabilizers and proves their commutation; it does not include the minimum-weight theorem for their span.

\section{The finite rate--measurement-weight tradeoff}
\label{sec:finite}
For a fixed distance $D=2^L$, the rate and maximum gauge weight are
\begin{equation}
 R(q_1,\ldots,q_L)=\prod_i\left(1-\frac2{q_i}\right),
 \qquad w(q_1,\ldots,q_L)=\max_iq_i.
\label{eq:rateweight}
\end{equation}
A tuple dominates another if it has at least as large a rate and at most as large a weight, with one strict inequality. The Pareto frontier consists of tuples not dominated by another allowed tuple. We identify tuples up to axis permutations. Capacity $K$ is a different objective from rate $K/N$.

\begin{theorem}[Complete frontier at distances 16, 32, and 64]
\label{thm:finiteopt}
Among even rectangular tuples with $q_i\ge4$ and $N\le10{,}000$, the complete frontiers for maximizing $R$ and minimizing $w$ are
\begin{equation}
\begin{array}{c|l}
D&\text{side tuples, up to permutation}\\\hline
16&4^4,\ 6^4,\ 8^4,\ 10^4,\\
32&4^5,\ 6^5,\\
64&4^6,\ 4^4 6^2.
\end{array}
\label{eq:frontier}
\end{equation}
The last tuple in each row uniquely maximizes the protected rate at that distance. No member under this data-qubit budget has distance $128$ or greater.
\end{theorem}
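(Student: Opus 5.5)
The plan is a finite enumeration, organized by the number of axes. By Theorem~\ref{thm:distance} the distance is $D=2^L$, so the rows $D=16,32,64$ correspond to $L=4,5,6$. For every even tuple with $q_i\ge4$ we have $N\ge4^L$. Since $4^7=16384>10^4$, no member with $L\ge7$ fits the budget, which gives the final sentence of the theorem.

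\textbf{Reduction to one optimization per weight.} For fixed $L$, $R$ is strictly increasing in each $q_i$, and it is symmetric. So the frontier can be read off weight by weight. For each even $w$, let $R^\ast(w)$ be the maximum of $R$ over tuples with all $q_i\le w$ and $N\le10^4$. A tuple lies on the frontier exactly when it attains $R^\ast(w)$ at its own weight $w$ and $R^\ast(w)>R^\ast(w-2)$. I will compute $R^\ast$ until it stabilizes at its global maximum, and check that the maximizer at each frontier weight is unique up to permutation.

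\textbf{Unique global maximizer for $L=4$ by concavity.} Set $t_i=\log q_i$ and $f(t)=\log(1-2e^{-t})$, so that $\log R=\sum_i f(t_i)$. The function $f$ is increasing and strictly concave: a direct computation gives $f''(t)=-2e^{-t}/(1-2e^{-t})^2<0$. The constraint is $\sum_i t_i\le\log 10^4$. Jensen's inequality then gives
\[
\log R\le 4f\!\left(\tfrac14\log10^4\right)=4\log(0.8),
\]
with equality only when every $q_i=10$. Hence $10^4$ is the unique rate maximizer at $D=16$. For $w=4,6,8$ the monotonicity of $R$ shows that $w^4$ is the unique maximizer with sides at most $w$, since $8^4=4096\le10^4$. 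The rates $R(w^4)$ strictly increase in $w$, so $4^4,6^4,8^4,10^4$ are exactly the frontier. Every tuple with $w\ge12$ has rate strictly below $0.8^4$, so none of them joins the frontier.

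\textbf{The cases $L=5$ and $L=6$ by hand.} Here Jensen's bound is not attained by even integers: $10^{4/5}\approx6.31$ and $10^{4/6}\approx4.64$. I will therefore argue directly, using the budget to bound the multiset of sides.

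For $L=5$: $8\cdot6^4>10^4$, so a tuple containing a side $\ge8$ has at most three sides $\ge6$. The admissible such multisets are few, for example $8\,6^3\,4$, $8\,6^2\,4^2$, $10\,6^2\,4^2$, $12\,6\,4^3$, and so on. I will bound each family by its best member, using $R(8\,6^3\,4)\approx0.111<(2/3)^5$. This shows that every tuple with $w\ge8$ has rate below $R(6^5)$. Among tuples with $w=6$, the monotonicity of $R$ makes $6^5$ the unique best, and $6^5=7776$ fits the budget.

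For $L=6$: $4^36^3>10^4$ and $8\cdot4^4\cdot6>10^4$, so the only tuples with $w\ge8$ are of the form $q\,4^5$ with $q\le38$. Their rates are at most $(1/2)^5<(1/2)^4(2/3)^2=R(4^4 6^2)$. The tuples with $w=6$ are $4^6$, $4^5 6$, and $4^4 6^2$, ordered by increasing rate. So $4^4 6^2$ (with $N=9216$) is the unique maximizer, and the frontier is $4^6,4^4 6^2$.

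\textbf{Main obstacle.} The delicate step is making the $L=5$ case analysis exhaustive and rigorous: every multiset containing a large side must be dominated without missing one. I would handle it by splitting on the number $m$ of sides $\ge6$. The budget forces $m\le3$ when some side is $\ge8$. For each $m$, I will bound the rate by taking the large side as big as the budget permits, and then compare with $(2/3)^5$.
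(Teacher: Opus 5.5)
\paragraph{Verdict.} Your overall route is the paper's, and three parts of it are correct: the $4^7>10^4$ cutoff, the weight-by-weight characterization of the frontier, and the Jensen argument for $L=4$. The two hand-checked cases have real problems, however.

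\paragraph{The case $L=6$.} The displayed inequality is false: $(1/2)^5=1/32$ exceeds $(1/2)^4(2/3)^2=1/36$. The bound $q\le 38$ is also the $L=5$ bound, since $4^4\cdot 38\le 10^4$. With five fours the budget gives $1024\,q\le 10^4$, and $4^5\cdot 10=10240$, so $q=8$ is the only option with weight at least eight. This correction is not cosmetic. If $q=38$ were allowed, $R(4^5\,38)=9/304$ would exceed $1/36$, and the claimed optimum $4^4 6^2$ would be wrong. With the correct list, $R(4^5\,8)=3/128<1/36$ because $108<128$, and the $L=6$ row follows.

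\paragraph{The case $L=5$.} You assert, but do not prove, that every tuple with a side at least eight has rate below $32/243$. The plan you give for it would not work as stated.
\begin{itemize}
\item The extremal tuple in that class is $(4,6,6,8,8)$ with rate $1/8$, not $8\,6^3\,4$ with rate $1/9$.
\item That tuple has \emph{two} sides equal to eight. Neither your list of examples nor the rule ``take the large side as big as the budget permits'' reaches it. Pinning the other sides at six instead gives $(4,6,6,6,10)$, whose rate $16/135$ is below $1/8$, so that rule does not yield an upper bound.
\item A single Jensen bound does not close the gap either. With exactly one four, the remaining four sides have product at most $2500$, which gives only $\frac12(1-2/\sqrt{50})^4\approx 0.1323$. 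That is just above $32/243\approx 0.1317$.
\end{itemize}

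\paragraph{How the paper closes $L=5$.} It splits on the number of sides equal to four.
\begin{itemize}
\item No fours: $8\cdot 6^4>10^4$ forces the tuple $6^5$.
\item Exactly one four: $6\cdot 8^3>2500$ forces two sixes. The last pair then has product below $70$, leaving $(6,6),(6,8),(6,10),(8,8)$, whose best rate is $1/8$.
\item Two or more fours: the remaining three sides have product at most $625<9^3$. Jensen then gives $R<\frac14(7/9)^3<1/8$.
\end{itemize}
Since $1/8<32/243$, the tuple $6^5$ is the unique maximizer and dominates every other tuple of weight at least six.
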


\begin{proof}
Since $N\ge4^L$, the budget gives $L\le6$. This proves the maximum-distance statement.

For $L=4$, the function $h(y)=\log(1-2e^{-y})$ is increasing and strictly concave for $y>\log2$. Hence
\begin{equation}
 \log R\le4h\!\left(\frac14\sum_i\log q_i\right)
 \le4h(\log10).
\end{equation}
Equality requires $q_i=10$ for every axis. Under weight caps four, six, and eight, the corresponding uniform tuples are feasible and maximize each increasing factor in \eqref{eq:rateweight}. Tuples with weight above ten are dominated by $10^4$. This gives the distance-16 row.

For $L=5$, a tuple with no side four has every side at least six. Any side at least eight would force $N\ge8\cdot6^4=10{,}368$. Thus $6^5$ is the only such tuple, with rate $32/243$.

If exactly one side is four, the product of the remaining four sides is at most $2{,}500$. At least two of those sides equal six: otherwise the product would be at least $6\cdot8^3>2{,}500$. After fixing two sixes, the last sorted even pair has product below $70$. Its possibilities are $(6,6)$, $(6,8)$, $(6,10)$, and $(8,8)$. The last pair gives the largest rate, namely
\begin{equation}
 \frac12\left(\frac23\right)^2\left(\frac34\right)^2=\frac18.
\end{equation}
If at least two sides are four, fix two of them. The other three have product at most $625$. The same concavity bound gives
\begin{equation}
 R\le\frac14\left(1-\frac2{\sqrt[3]{625}}\right)^3
 <\frac14\left(\frac79\right)^3<\frac18.
\end{equation}
Since $1/8<32/243$, the unique rate maximum is $6^5$. Every weight-at-least-six alternative is dominated by it. The tuple $4^5$ has the smallest possible weight and remains on the frontier.

For $L=6$, three sides above four would imply $N\ge4^3 6^3=13{,}824$. If two sides exceed four, both must be six, because $4^4\cdot6\cdot8>10{,}000$. If exactly one side exceeds four, it is six or eight. The complete list is therefore
\begin{equation}
 4^6,\quad4^5 6,\quad4^5 8,\quad4^4 6^2,
\end{equation}
with rates $1/64$, $1/48$, $3/128$, and $1/36$. The last tuple dominates the middle two, and $4^6$ has smaller measured weight. This proves the remaining row.
\end{proof}

\begin{table}[tbp]
\centering
\caption{The complete frontiers under $N\le10{,}000$. The column $w$ is the largest measured gauge weight; $d(S_0)$ is the minimum nonidentity stabilizer weight. Exponents denote repeated side lengths.}
\label{tab:frontier}
\small
\begin{tabular}{@{}lrrrrrrr@{}}
\toprule
Sides&$N$&$K$&$g$&$D$&$K/N$&$w$&$d(S_0)$\\
\midrule
$4^4$&256&16&110&16&$1/16$&4&32\\
$6^4$&1,296&256&302&16&$16/81$&6&48\\
$8^4$&4,096&1,296&590&16&$81/256$&8&64\\
$10^4$&10,000&4,096&974&16&$256/625$&10&80\\
\midrule
$4^5$&1,024&32&570&32&$1/32$&4&64\\
$6^5$&7,776&1,024&2,550&32&$32/243$&6&96\\
\midrule
$4^6$&4,096&64&2,702&64&$1/64$&4&128\\
$4^4 6^2$&9,216&256&5,422&64&$1/36$&6&128\\
\bottomrule
\end{tabular}
\end{table}

\begin{figure}[tbp]
\centering
\includegraphics[width=.9\linewidth]{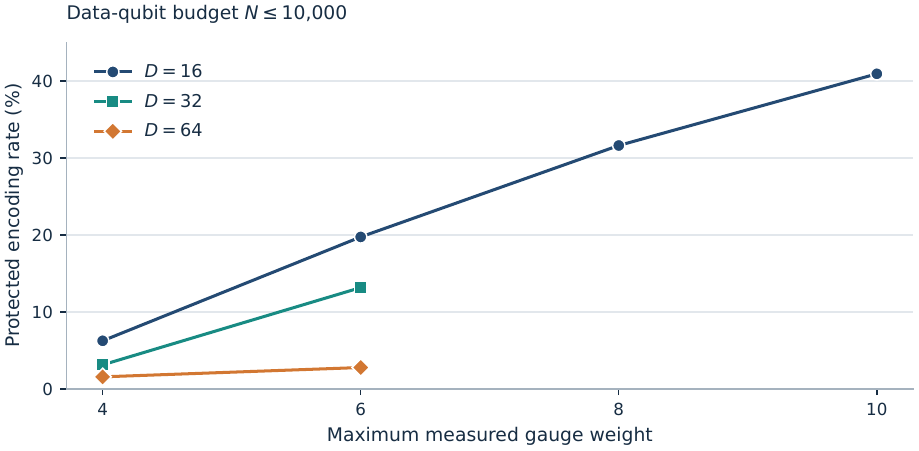}
\caption{Exact rate--weight frontiers at fixed distance under $10{,}000$ data qubits. The points are the discrete choices in \cref{tab:frontier}; line segments guide the eye and do not represent intermediate codes. Increasing the number of axes raises distance and reduces the maximum achievable rate under this budget.}
\label{fig:frontier}
\end{figure}

For a strict measured-weight cap $w<10$, the maximum-rate distance-16 member becomes $\RLG(8^4)$; the distance-32 and distance-64 optima are unchanged. At distance 32, the tuple $(4,6,6,8,8)$ protects $1{,}152$ qubits, more than the $1{,}024$ protected by $6^5$, but has lower rate $1/8$ and larger measured weight eight. It is therefore absent from the rate--weight frontier. This example shows why the objective must be specified when claiming an optimum.

These finite results use the distance theorem rather than a distance search. An independent exact-arithmetic enumeration gives $514$, $55$, and $4$ sorted tuples for $L=4,5,6$, respectively, and reproduces \eqref{eq:frontier}. The independent verification program also reconstructs small binary gauge matrices and checks their ranks and witnesses. The enumeration procedure appears in Appendix~\ref{app:verification}.

\section{Measurement counts and ideal schedules}
\label{sec:resources}
A line-gauge presentation specifies both how many operators are measured and how often each data qubit participates. These counts are distinct from the number of independent gauge generators.

\begin{proposition}[Full-line measurement counts]
\label{prop:resources}
For $\RLG(q_1,\ldots,q_L)$, the numbers of distinct full-line operators are
\begin{align}
 M_X=M_Z&=\sum_i\frac{N}{q_i},&M&=2\sum_i\frac{N}{q_i},\label{eq:measurements}\\
 E&=2LN,&\Delta_X=\Delta_Z&=L.\label{eq:incidences}
\end{align}
Here $E$ counts data--gauge incidences and $\Delta_X+\Delta_Z=2L$ is the total data degree. There are $M_X-a$ binary row dependencies per sector.

Consider an ideal circuit using one reusable ancilla per active line, with all lines of one axis measured in parallel. Measure every $X$-axis batch before the $Z$-axis batches. The schedule uses
\begin{equation}
 A_{\max}=\frac{N}{\min_iq_i},\qquad
 2L\ \text{batches},\qquad
 2\sum_iq_i\ \text{two-qubit gate layers}.
\label{eq:schedule}
\end{equation}
The layer count assumes direct ancilla--data interactions and excludes preparation, readout, reset, routing, and protection against fault propagation.
\end{proposition}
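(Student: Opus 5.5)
The plan is to count lines directly and then read the remaining quantities off the counts, using Theorem~\ref{thm:parameters} for the gauge rank $a$. An axis-$i$ line is determined by its coordinates on the other $L-1$ axes. There are therefore $\prod_{j\ne i}q_j=N/q_i$ distinct axis-$i$ lines. Distinct lines have distinct supports, since two lines that are equal as sets have the same axis and the same fixed coordinates. So in each Pauli sector we get $M_X=M_Z=\sum_iN/q_i$ distinct operators, and $M=2\sum_iN/q_i$.

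For incidences, each site $a\in\Omega$ lies on exactly one axis-$i$ line, namely the one obtained by freeing coordinate $a_i$. Thus each data qubit meets $L$ lines per sector, which gives $\Delta_X=\Delta_Z=L$. Summing over the $N$ sites and both sectors gives $E=2LN$. As a consistency check, this should equal $\sum_i (N/q_i)\cdot q_i$ per sector, which it does. The number of binary row dependencies per sector is the dimension of the kernel of the map $\F_2^{M_X}\to\F_2^N$ sending each line to its indicator vector. By rank--nullity this kernel has dimension $M_X-a$, where $a=N-\prod_i(q_i-1)$ comes from Theorem~\ref{thm:parameters}.

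For the schedule, I would first justify that a batch is well defined. All axis-$i$ lines are pairwise disjoint, so measuring them in parallel uses each data qubit at most once per gate layer. A batch measuring all $N/q_i$ lines of axis $i$ in one sector therefore needs one ancilla per line, that is, $N/q_i$ ancillas. With ancillas reused across batches, the peak count is $\max_iN/q_i=N/\min_iq_i$.

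Each line ancilla must interact with its $q_i$ data qubits one at a time, so the batch takes $q_i$ two-qubit gate layers. All lines in the batch proceed synchronously, because disjointness means no data qubit is shared within a layer. There are $L$ axes and two sectors, giving $2L$ batches. Performing all $X$ batches before the $Z$ batches (the order itself is immaterial for the count) yields $\sum_i q_i$ layers per sector, hence $2\sum_iq_i$ layers in total.

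The main obstacle is interpretive rather than technical. The layer count rests on the stated idealization: one two-qubit gate per data qubit per layer, direct connectivity, and no parallelism across batches. I would state explicitly that the batches are run sequentially. This sequencing is needed because lines of different axes intersect, so overlapping batches would create conflicts on shared data qubits. The layer count is then an exact count for this particular schedule, not a lower bound over all schedules.
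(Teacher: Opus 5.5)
Your proposal is correct and follows the paper's proof essentially step for step. Both arguments partition each axis into $N/q_i$ disjoint lines, count one line per axis per sector through each qubit, and obtain the dependency count by subtracting the rank $a$. The schedule is then built from disjoint same-axis batches of $q_i$ layers with a reused ancilla pool. The paper adds one remark: the $X$-before-$Z$ ordering lets commuting same-Pauli outcomes be multiplied into stabilizer values. As you note, this does not affect any of the counts.
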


\begin{proof}
The lines parallel to axis $i$ partition the grid into $N/q_i$ disjoint supports. Summing over axes and sectors proves \eqref{eq:measurements}. Each qubit belongs to one line per axis in each sector, proving \eqref{eq:incidences}. Subtracting the rank $a$ gives the dependency count.

Within an axis batch, disjoint lines can use separate ancillas without gate collisions. Each line ancilla interacts with its $q_i$ data qubits in $q_i$ successive layers. Reusing the same ancilla pool across batches gives \eqref{eq:schedule}. All $X$ gauges commute with one another, as do all $Z$ gauges, so this sector ordering allows products of same-Pauli outcomes to infer their stabilizer products in the ideal circuit.
\end{proof}

\begin{table}[tbp]
\centering
\caption{Counts for the three maximum-rate examples. Dependencies include both Pauli sectors. ``Data + ancillas'' uses the reusable pool in \eqref{eq:schedule}; it is not a fault-tolerant footprint.}
\label{tab:resources}
\small
\begin{tabular}{@{}lrrr@{}}
\toprule
Quantity&$10^4$&$6^5$&$4^4 6^2$\\
\midrule
Distinct gauge measurements $M$&8,000&12,960&24,576\\
Gauge rank $2a$&6,878&9,302&14,382\\
Stabilizer rank $2s$&4,930&4,202&3,538\\
Row dependencies $M-2a$&1,122&3,658&10,194\\
Data--gauge incidences $E$&80,000&77,760&110,592\\
Total data degree $2L$&8&10&12\\
Reusable ancillas $A_{\max}$&1,000&1,296&2,304\\
Data + ancillas&11,000&9,072&11,520\\
Axis/Pauli batches&8&10&12\\
Ideal two-qubit gate layers&80&60&56\\
\bottomrule
\end{tabular}
\end{table}

A data-qubit cap and a live-qubit cap yield different optimizations. Under $N+A_{\max}\le10{,}000$, exact enumeration gives maximum-rate tuples $(8,10,10,10)$, $6^5$, and $4^5 6$ at distances $16$, $32$, and $64$. Their data-plus-ancilla counts are $9{,}000$, $9{,}072$, and $7{,}680$, respectively. This calculation uses precisely the ideal ancilla policy in \eqref{eq:schedule}.

A smaller ancilla pool trades space for time. With only $512$ reusable line ancillas, $\RLG(4^4 6^2)$ fits in $9{,}728$ data-plus-ancilla qubits. Each length-four axis must be split into five subbatches, and each length-six axis into three, giving $52$ subbatches and $232$ ideal two-qubit layers across both sectors. The code parameters are unchanged.

In a noisy circuit, a single ancilla fault can propagate to several data qubits. Gauge outcomes from different times also cannot be multiplied indiscriminately to create error detectors. Flags, verified ancillas, measurement repetition, and an explicit circuit analysis may change both the footprint and time. Equations~\eqref{eq:measurements}--\eqref{eq:schedule} are exact counts for the stated ideal schedule; they are not a claim that the circuit inherits distance $2^L$.

\section{Relation to existing codes}
\label{sec:comparison}
The uniform $q=4$ and $q=6$ members are subsystem many-hypercube codes~\cite{nakaigoto2026}. Neural decoding and gauge-measurement scheduling have been studied under circuit-level noise for selected uniform side-four members~\cite{nakaigoto2026neural}; those results do not establish the performance of the optimized rectangular codes considered here. Mixed side lengths have also been considered for ordinary concatenated many-hypercube codes~\cite{goto2025optimized}. The present analysis treats the full-line gauge group for arbitrary even rectangular tuples, determines its complete stabilizer support, and proves the dressed-distance lower bound for every gauge representative. The finite optimization then selects a tuple using proved parameters rather than estimating distance separately for each candidate.

Subsystem products provide a useful comparison with a different relation between rate and distance. For two classical $[n_i,k_i,d_i]$ codes with $k_1,k_2>0$, the standard subsystem product has $N=n_1n_2$, $K=k_1k_2$, and dressed distance $\min(d_1,d_2)$~\cite{baconcasaccino2006,liyoder2020}. Applying this construction to two binary simplex codes produces the SHYPS family~\cite{malcolm2026},
\begin{equation}
 \sqcode{(2^r-1)^2}{r^2}{(2^r-1-r)^2}{2^{r-1}},
\end{equation}
with weight-three gauge generators for $r\ge3$. At $r=6$, it gives $\sqcode{3969}{36}{3249}{32}$. The uniform $6^5$ line-gauge code has the same static distance and a protected rate $32/243\simeq13.17\%$, compared with $36/3969\simeq0.91\%$ for SHYPS(6). It also uses more data qubits and doubles the gauge weight from three to six. The SHYPS work develops logical Clifford operations and circuit-level simulations, which are outside the results established here. Static parameter comparisons do not determine which code is preferable for a complete computation.

High-rate stabilizer constructions, including bivariate bicycle codes~\cite{bravyi2024}, measure commuting checks rather than the noncommuting gauges used here. Hong constructs balanced-product QLDPC codes with design rate $1/5$ and weight-nine checks, and evaluates their decoding performance under idling-free circuit-level noise~\cite{hong2026}. Their measured generators should therefore be compared with our gauges only together with a specified extraction circuit. A numerical logical witness proves an upper bound on distance. A finite exhaustive certificate proves an exact distance when accompanied by a matching witness. Such a certificate is as rigorous for its stated instance as a symbolic proof; the advantage of the present formula is that it applies to every allowed tuple without repeating a finite-instance distance computation.

For a uniform fixed side length $q$ and growing $L$,
\begin{equation}
 N=q^L,\qquad
 K=N^{\log_q(q-2)},\qquad
 D=N^{\log_q2},\qquad
 R=N^{\log_q(1-2/q)}.
\end{equation}
The rate tends to zero, the full-line data degree is $2\log_qN$, and the lightest stabilizer has weight $(q/2)D$. Thus this family does not have the asymptotic guarantees of good QLDPC stabilizer codes~\cite{panteleev2022,leverrier2022}. Its useful feature is the direct control of finite rate, exact distance, and gauge weight, with a simple description of every measured support.

\section{Formalization and reproducibility}
\label{sec:formal}
The Lean source is available in the \href{https://github.com/VerifiedQC/Lean-QEC/tree/subsystem-codes/LeanQEC/Stabilizer/Subsystem}{Lean-QEC GitHub repository}. The four subsystem modules used here, \leanfile{Basic.lean}, \leanfile{RLG.lean}, \leanfile{GridLines.lean}, and \leanfile{RLGDistance.lean}, are fixed at \href{https://github.com/VerifiedQC/Lean-QEC/tree/eccb50c389b4d5d9feab3e29809b23aa1761fa14/LeanQEC/Stabilizer/Subsystem}{commit \texttt{eccb50c389b4d5d9feab3e29809b23aa1761fa14}}. This snapshot also records the Lean toolchain and dependency revisions.

The formal proof establishes dressed and bare distance for the constructed subsystem record. The full-center equality, rank formulas, minimum stabilizer weight, measurement counts, and finite frontiers are established in this paper and are not claimed as Lean theorems.

\section{Discussion}

We prove that the dressed distance of rectangular coordinate-line subsystem codes is fixed entirely by the number of axes, despite the freedom to modify logical operators by gauge multiplication. For every grid with even side lengths of at least four, the distance is exactly $2^L$. At fixed distance, the side lengths can therefore be chosen to balance encoding rate and measurement weight using explicit formulas. The $10{,}000$-qubit frontiers demonstrate how this structure turns code selection into a finite optimization with proved parameters. The same approach applies to other qubit budgets and measurement-weight constraints.

The stabilizer analysis reveals where the benefit of gauge measurements arises. Although each measured operator acts along a single coordinate line, every nonidentity stabilizer has weight at least $2^{L-1}\min_i q_i$. Gauge freedom thus allows the information carried by increasingly heavy stabilizers to be accessed through lighter measurements. This distinction also identifies a central implementation challenge: recovering that information reliably requires control of errors in both the data and the measurement record. The minimum stabilizer weight makes clear that choosing a different stabilizer basis cannot remove this challenge.

The Lean development provides a complementary proof of the distance theorem by induction on the number of axes, with explicit treatment of Pauli phases and gauge equivalence. Extending it to the full-center equality and rank formulas would yield a complete formal verification of the code parameters. For implementation, the next step is to develop and evaluate decoders and fault-tolerant extraction circuits for the rectangular choices studied here, building on the decoding and measurement-order studies for uniform subsystem many-hypercube codes~\cite{nakaigoto2026,nakaigoto2026neural}. These must account for the timing of gauge measurements, propagation of ancilla faults, and the connectivity available in a physical device. The resulting circuit distance and logical error rates may favor different side lengths from those selected by static rate and measurement weight alone. Establishing this connection would determine when the encoding advantage translates into a reduction in the resources needed for a reliable quantum memory.

\appendix
\section[Exact finite enumeration]{Exact finite enumeration}
\label{app:verification}
For fixed $L$, every allowed tuple can be enumerated in nondecreasing order. At a recursion node, let $p$ be the product of the chosen sides and let $r$ sides remain. A candidate next side $q$ can be accepted only if $pq^r\le B$, where $B$ is the data-qubit budget. All later sides must be at least $q$, so this condition is also an exact stopping test. No floating-point root or distance search is needed.

An exact-arithmetic implementation of this recursion compares rates as rational numbers. For $B=10{,}000$, it enumerates $514$, $55$, and $4$ tuples at $L=4,5,6$. Grouping the middle case by the number of sides equal to four gives
\begin{center}
\begin{tabular}{@{}crlc@{}}
\toprule
Number of fours&Tuples&Largest-rate tuple in the group&Rate\\
\midrule
0&1&$6^5$&$32/243$\\
1&4&$(4,6,6,8,8)$&$1/8$\\
2&11&$(4,4,6,10,10)$&$8/75$\\
3&21&$(4,4,4,12,12)$&$25/288$\\
4&17&$(4,4,4,4,38)$&$9/152$\\
5&1&$4^5$&$1/32$\\
\bottomrule
\end{tabular}
\end{center}
The program also checks gauge and stabilizer rank identities, the raw measurement counts and weight histograms, both qubit-budget policies, and small explicit binary line matrices. It exhausts the dressed quotient for the $\sqcode{4}{2}{0}{2}$ and $\sqcode{16}{4}{2}{4}$ examples. These finite checks passed. They support the numerical transcription and small instances; the general distance claim rests on the proofs in the text and the source declarations described above.

\bibliographystyle{unsrtnat}
\bibliography{references}
\end{document}